\setlist[itemize]{noitemsep,label=\small$\blacktriangleright$}
\setlist[enumerate]{noitemsep}
\DeclarePairedDelimiterX\innp[2]{\langle}{\rangle}{#1,#2}
\DeclarePairedDelimiter\floor{\lfloor}{\rfloor}
\DeclareMathOperator{\EX}{\mathbb{E}}
\DeclareMathOperator{\prob}{\mathbb{P}}
\newcommand{\N}{{\mathbb{N}}}
\newcommand{\diff}{\mathop{}\!\mathrm{d}}
\let\phi\varphi
\let\oldforall\forall
\renewcommand{\forall}{\oldforall \, }
\let\oldexist\exists
\renewcommand{\exists}{\oldexist \: }
\newcommand{\be}{\begin{equation}}
\newcommand{\ee}{\end{equation}}
\newcommand\eps{\varepsilon}
\newcommand{\eq}[1]{(\ref{#1})} 
\newcommand{\fig}[1]{Fig.~\ref{#1}}
\newcommand{\lev}[1]{\lambda_{\max#1}}
\newcommand{\levtri}[1]{\lambda_{\max#1}}
\newcommand{\levlin}[1]{\lambda_{\max#1}^{\rm(lin)}}
\newcommand{\levbound}[1]{\lambda_{b#1}}
\newcommand{\varnlin}[1]{n_{\rm lin}}
\newcommand{\dmax}[1]{d_{\max#1}}
\newcommand{\eignorm}[1]{{\cal C}}
\newcommand{\ERname}{Erd\H{o}s-R\'enyi}
\newcommand{\fancyG}{{\cal G}}
\DeclareMathOperator{\LSR}{LSR}
\newtheorem{proposition}{Proposition}
\begin{document}



\title{Largest eigenvalue statistics of sparse random adjacency matrices}

\author{Bogdan Slavov}
\author{Kirill Polovnikov}
\affiliation{Skolkovo Institute of Science and Technology, 121205 Moscow, Russia} 

\author{Sergei Nechaev} 
\affiliation{LPTMS, Universit\'e Paris Saclay, 91405 Orsay Cedex, France}
\affiliation{Laboratory of Complex Networks, Brain and Consciousness Research Center, Moscow, Russia}

\author{Nikita Pospelov}
\affiliation{Institute of Chemical Physics, Moscow, Russia}
\affiliation{Moscow State University, Moscow, Russia}


\begin{abstract}

We investigate the statistics of the largest eigenvalue, $\lambda_{\rm max}$, in an ensemble of $N\times N$ large ($N\gg 1$) sparse  adjacency matrices, $A_N$. The most attention is paid to the distribution and typical fluctuations of $\lambda_{\rm max}$ in the vicinity of the percolation threshold, $p_c=\frac{1}{N}$. The overwhelming majority of subgraphs representing $A_N$ near $p_c$ are exponentially distributed linear subchains, for which the statistics of the normalized largest eigenvalue can be analytically connected with the Gumbel distribution. For the ensemble of {\rm all} subgraphs near $p_c$ we suggest that under an appropriate modification of the normalization constant the Gumbel distribution provides a reasonably good approximation. Using numerical simulations we demonstrate that the proposed transformation of $\lambda_{\rm max}$ is indeed Gumbel-distributed and the leading finite-size corrections in the vicinity of $p_c$ scale with $N$ as $\sim \ln^{-2}N$. All together, our results reveal a previously unknown universality in eigenvalue statistics of sparse matrices close to the percolation threshold. 


\end{abstract}


\maketitle

\tableofcontents


\section{Introduction}

The rare-event statistics has many manifestations in natural sciences. To name but a few, we can mention the peculiar statistics of communication receivers \cite{planat}, of sparse contact maps of protein-protein interactions \cite{fly}, of individual DNA molecules in cell nuclei \cite{cell}. The peculiarity of thermal noise emerges on the level of nano-objects \cite{nano1,nano2}, in dynamic properties of dendritic polymers \cite{dendritic}, etc. The everyday experience tells us that it is difficult to expect a nontrivial statistical patterns in sparse datasets. However, the spectral analysis of sparse datasets often demonstrates very peculiar hierarchically organized patterns. The rare-event statistics naturally emerges in high dimensional spaces, where it manifests itself in the special hierarchical organization of distances between points, known as ``ultrametricity'' \cite{ultram}. Sparse statistics and ultrametricity together are rooted in high dimensionality and randomness. That has been unambiguously shown in \cite{zubarev}, where it was proved that in a $D$-dimensional Euclidean space the distances between points in a highly sparse samplings tend to the ultrametric distances as $D\to\infty$.

From the other hand, experimenting with physical properties of highly diluted solutions of biologically active substances, one should pay attention to a very peculiar structure of a background noise originating from the rare-event statistics of dissolved clusters. The peculiar shape of a sparse random noise spectrum can be misinterpreted, or at least can make the data incomprehensible \cite{mairal2014sparse, peleg2010exploiting}. In order to conclude about any biological activity of regarded substance, the signal from background noise should be clearly identified. From this point of view, the work \cite{cell} seems very interesting, since it represents an exceptional example of careful attention of to unusual hierarchical distributions in real biological and clinical data which are indebted to randomness.

The information about topological and statistical properties of dissolved substances can be collected by measuring their relaxation spectra in the solution \cite{brouwer2011spectra}. Roughly, a dissolved polymeric cluster can be modelled by a set of monomers (atoms) connected by elastic springs. If deformations of springs are small, the response of the molecule on external excitation is harmonic according to the Hooke's law. The relaxation modes are determined by the Laplacian matrix of the molecule. Measuring the response of the diluted solution of individual polymeric clusters on external excitation, on can see the signature of different eigenmodes in the spectral density as peaks at specific frequencies. In physical literature the spectrum of the adjacency matrix of a polymeric cluster typically is interpreted as the set of resonant frequencies, while the Laplacian spectrum provides the information about the typical relaxation times of the system.

Specifically, we consider a cluster of connected atoms as an $N$-vertex network (graph). Let us enumerate the atoms by the index $i=1 \ldots N$. The adjacency matrix $A = \{a_{ij}\}$ describes the topology (connectivity) of a cluster, it is symmetric ($a_{ij}=a_{ji}$) and its matrix elements, $a_{ij}$, take binary values, 0 and 1, such that diagonal elements vanish, i.e. $a_{ii}=0$. For off-diagonal elements, $i\ne j$, we set $a_{ij}=1$, if the vertices (atoms) $i$ and $j$ are connected, and $a_{ij}=0$ otherwise:
\be
a_{ij}= \begin{cases} 1 & \mbox{with probability $p$} \medskip \\ 0 & \mbox{with probability $1-p$} \end{cases}
\label{eq:01}
\ee
for $i\ne j$. The symmetric matrix $A$ is an adjacency matrix of a random Erd\H{o}s-R\'enyi graph $G$ without self-connections and double edges. The eigenvalues of $A$ are all real.

Spectrum and topology of \ERname{} graphs are controlled by the dependency of $p$ on $N$. Many results are known in cases when $p$ goes to zero slower than $1/N$. Meanwhile, there are many white spots in the case when $p = c / N$, where $c$ is a constant. In \cite{ks2003} Krivelevich and Sudakov proved that for $p\in(0,1)$ the typical largest eigenvalue is 
\begin{equation}
    \lev{} = (1+o(1))\max\{\sqrt{\dmax{}},Np\},
\end{equation}
where $\dmax{}$ is the maximal vertex degree in $G$. Intuition is the following. The largest eigenvalue of a star-graph is $\sqrt{\Delta}$, where $\Delta$ is the degree of the central node. Since the star with $\Delta=\dmax{}$ is a subgraph of $G$, the $\lev{}\geq{\sqrt{\dmax{}}}$. At the same time $\lev{} \geq \bar{d}$, the average degree in $G$. It turns out that there is a threshold between two cases: when $\lev{}$ is asymptotically determined by $\sqrt{\dmax{}}$ (``more sparse'' case) and by $\bar{d}$ (``denser'' case). In \cite{BenaychGeorges2019, BenaychGeorges2020,Alt2021} other eigenvalues in spectrum are analyzed for different regimes of $p$. Eigenvalues fluctuations are also an object of interest in literature. In recent work \cite{Bhattacharya2021} the lower and upper tail large deviations of $\lev{}$ are studied for $n^{o(1)-1}\ll p \ll \frac{1}{n}\sqrt{\frac{\ln N}{\ln \ln N}}$ (we discuss the fraction of logarithms later). In the preprint \cite{diaconu2022clts} $\lev{}$ is stated to have Gaussian fluctuations when $n^{\eps-1} \leq p \leq \frac{1}{2}$, $\eps\in(0,1)$. The cavity and replica methods of statistical mechanics are used in \cite{Susca2019, Susca2021, Kabashima2010, Kabashima2012} for studying the typical value of $\lev{}$ as well as the distribution of top eigenvector’s components in sparse graphs with bounded maximal degree.

In the present work we study the fluctuations of $\lev{}$ of \ERname{} graphs in the vicinity of the percolation point $p_c=1/N$. We start with the qualitative investigation of spectral boundaries and analytical derivation of the largest eigenvalue distribution for exponentially distributed linear chains. Then we conjecture that after appropriate choice of the normalization constant,
\begin{equation}
    \frac{\pi}{\arccos \frac{\levlin{}}{2}} \to \frac{\pi}{\arccos \frac{\lev{}}{\eignorm{}}},
\end{equation}
the proposed transformation of $\lev{}$ continues to be Gumbel distributed. We show that Gumbel distribution leads to the same scaling of finite-size corrections ($\ln^{-2} N$) established qualitatively in the vicinity of $p_c=1/N$. We also provide numerical simulations supporting the conjecture.

\section{Scaling estimates of spectral boundaries}

The ideas of works \cite{bowick}, applied to ensembles of Gaussian random matrices can be straightforwardly translated to the ensemble of random symmetric adjacency matrices $A$. Namely, we can estimate the finite size corrections to the eigenvalues which bound the main spectral zone in dense ($p=O(1)$) and sparse ($p=\frac{1}{N}$) ensembles of matrices $A$.

Let $\rho$ be the eigenvalue density of the ensemble of such matrices. For $p=O(1)$ in \eq{eq:01}, the spectral density, $\rho(\lambda)$, where $\lambda$ designates the eigenvalue of $A$, consists of the main zone in a form of a Wigner semicircle, $\rho_W(\lambda)$, typical for the Gaussian matrix ensembles, and one separated far-removed largest eigenvalue, $\lev{}$. The Wigner semicircle
\be
\rho_W(\lambda) = \frac{2}{\pi\levbound{}^2} \sqrt{\levbound{}^2-\lambda^2}
\label{eq:02}
\ee
bounds the main zone of the spectrum by the values $\pm \levbound{}$, where 
\be
\levbound{} = f(p)\sqrt{N}
\label{eq:boundary}
\ee
and $f(p)$ is some function of the connectivity, $p$.

To have a intuition about the typical behaviors of spectral densities $\rho(\lambda)$ in dense and sparse regimes, we have plotted in \fig{fig:01}a,b the function $\rho(\lambda)$ for $p=O(1)\approx 0.02$ (figure (a)) and for $p=\frac{1}{N}=0.0002$ (figure (b)) for ensembles of random adjacency matrices of size $N=5000$ with the Bernoulli distribution of matrix elements \eq{eq:01}. The plot in \fig{fig:01}c provides the spectral density of tridiagonal symmetric matrix with random distribution of off-diagonal elements: one has $a_{k,k+1}=1$ with the probability $p_{\rm lin}$ and $a_{k,k+1}=0$ with the probability $1-p_{\rm lin}$ (for all $1\le k\le N-1$, independent on $k$).

\begin{figure}[ht]
\centering
\includegraphics[width=0.9\textwidth]{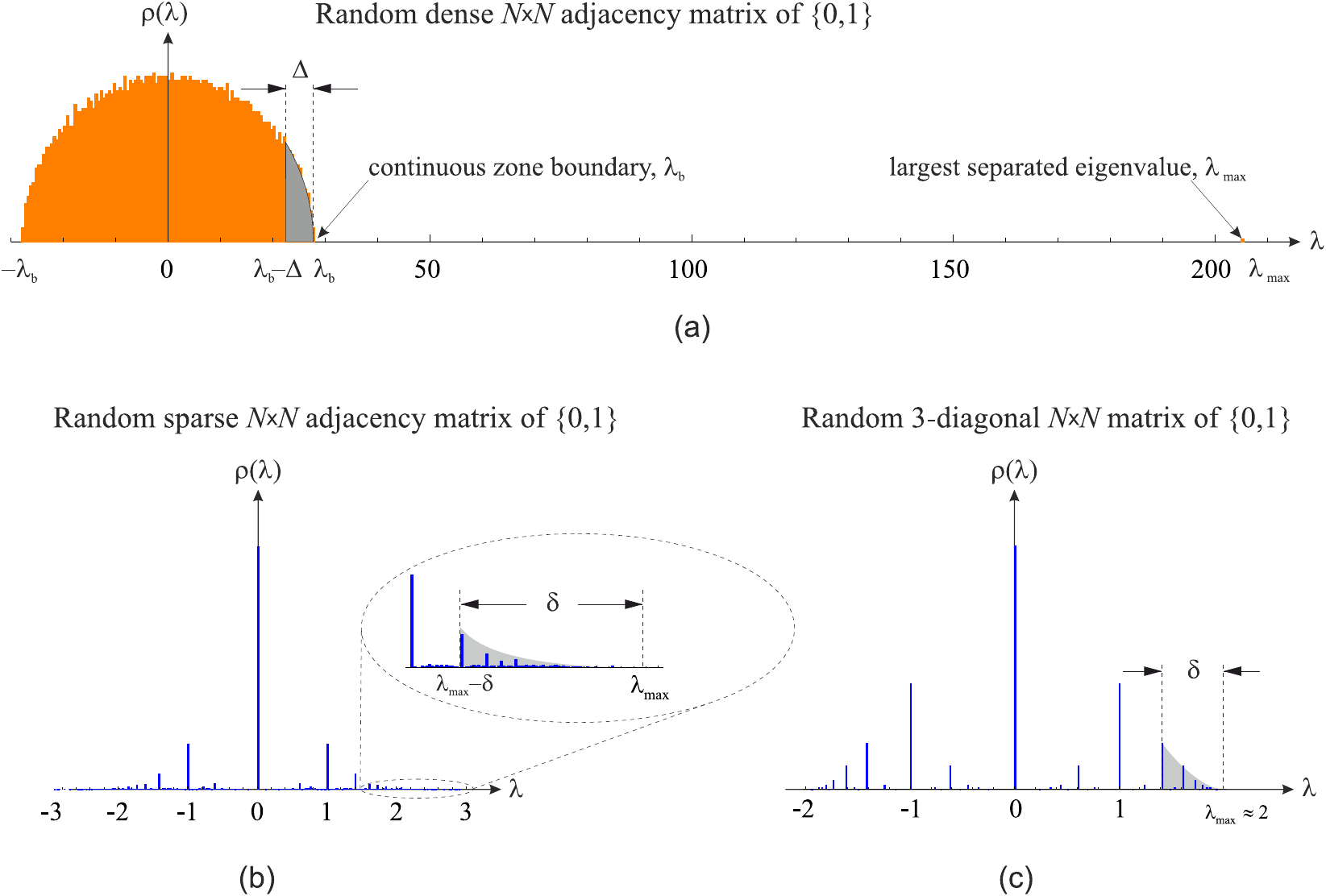}
\caption{Spectral densities of random $N\times N$ matrices in (a) dense, (b) sparse regimes, and (c) of a tridiagonal random operator with off-diagonal $\{0,1\}$ disorder.}
\label{fig:01}
\end{figure}

The behavior of the true maximal eigenvalue, $\lev{}$, follows from the Perron-Frobenius theorem, which states that $\lev{}$ of a positive matrix $A=\{a_{ij}\}$ satisfies the bilateral inequality
\be
\min_i \sum_{j=1}^N a_{ij} \le \lev{} \le \max_i \sum_{j=1}^N a_{ij}
\label{eq:03}
\ee
When $p\gg \ln N/N$ the graph is almost regular with high probability (Proposition 2.4.1 in \cite{Vershynin2018}) and $\lev{}$ is sandwiched between two variables, both of which $\approx \bar{d}$. Thus we arrive at the following expectation for $\lev{}$ at $N\gg 1$:
\be
\lev{} \approx p N
\label{eq:04}
\ee

The requested estimate of the finite-size correction, $\levbound{}(N)= f(p)\sqrt{N}$, to the main zone of the spectral boundary (see \fig{fig:01}a) in the dense regime is as follows. Suppose that the function $f(p)$ depends on $p$ only and is $N$-independent. Define the typical distance, $\Delta$, between adjacent eigenvalues in the vicinity of main zone boundary, $\levbound{}$. By definition the integral of $\rho(\lambda)$ over the interval $\left[\levbound{}-\Delta, \levbound{}\right]$ is the fraction of eigenvalues falling within this range, i.e.
\be
\label{eq:05}
\int\limits_{\levbound{}-\Delta}^{\levbound{}} \rho(\lambda)\, d\lambda \simeq \frac{1}{N}
\ee
Plugging \eq{eq:02} into \eq{eq:05} and taking into account that $\levbound{} = f(p)\sqrt{N}$, one arrives at the equation
\be
\frac{2}{\pi f^2(p)N} \int\limits_{f(p)\sqrt{N}-\Delta}^{f(p)\sqrt{N}} \sqrt{f^2(p)N-\lambda^2}\, d\lambda \simeq \frac{1}{N}.
\label{eq:06}
\ee
which provides an estimate of subleading scaling correction, $\Delta$ in the vicinity of $\lambda_{\rm max}$, valid at $N\gg 1$:
\be
\frac{4 \sqrt{2} \Delta^{3/2}}{3 \pi f^{3/2}(p) N^{3/4}} \simeq \frac{1}{N}; \qquad \Delta \simeq \frac{(3\pi)^{2/3}}{2^{5/3}} f(p) N^{-1/6}.
\label{eq:07}
\ee
Thus, the eigenvalue $\levbound{} \approx f(p)\sqrt{N}$ which bounds the continuous zone of the spectral density at large finite $N$ is defined with the uncertainty $\Delta \sim f(p) N^{-1/6}$, i.e.
\be 
\levbound{} \simeq f(p)\sqrt{N} \pm  f(p) N^{-1/6}
\label{eq:07a}
\ee

The same line of reasoning can be extended to estimate the uncertainty of the largest eigenvalue, $\lev{}$, of the sparse matrix ensemble at the percolation threshold, $p=p_c=1/N$. Note that now $\lev{}$ is not detached from other eigenvalues. It is known \cite{khorun, kirsch} that the spectral density, $\rho(\lambda)$, of an ensemble of sparse matrices near the spectral edge, $\lev{}$, has the singular behavior which manifests itself in the appearance of a``Lifshitz tail'',
\be
\rho(\lambda) \sim e^{-\frac{g(p)}{\sqrt{|\lev{}-\lambda|}}}
\label{eq:08}
\ee
where $g(p)$ is some function of the graph connectivity. Proceeding with \eq{eq:08} as with \eq{eq:02} and \eq{eq:05}, we get
\be
\int\limits_{\lambda_{\rm max}-\delta}^{\lambda_{\rm max}} e^{-\frac{g(p)}{\sqrt{|\lambda_{\rm max}-\lambda|}}}\, d\lambda \simeq \frac{1}{N}
\label{eq:09}
\ee
where $\lambda_{\rm max}$ is the maximal (boundary) eigenvalue of the spectrum in the sparse matrix ensemble. Defining the new variable $\delta=\lambda_{\rm max}-\lambda$ ($0<\lambda<\lambda_{\rm max}$) and performing the integration in \eq{eq:06}, we arrive at the equation for $\delta$ ($0<\delta\ll 1$):
\be
\left(\frac{2\delta^{3/2}}{g^3(p)}+O(\delta^{2})\right)e^{-\frac{g(p)}{\sqrt{\delta}}} \simeq \frac{1}{N}
\label{eq:10}
\ee
At $\delta\to 0$ the solution of \eq{eq:10} up to the leading term is:
\be
\delta \approx \left(\frac{g(p)}{\ln N}\right)^2
\label{eq:delta}
\ee
Thus, for $N\gg 1$ one arrives at the following finite size correction to the leading eigenvalue in sparse regime
\be
\lev{}(N)\big|_{N\gg 1} \simeq \lev{}(\infty) \pm \frac{g^2(p)}{\ln^2 N}
\label{eq:11}
\ee
More refined estimation of the asymptotic value $\lev{}(\infty)\equiv \lev{}$ in the vicinity of the percolation threshold is the subject of discussion provided in Section VI.

\section{\label{sec:linspec} Spectrum of linear chains}

Consider a symmetric $n \times n$ tridiagonal matrix\footnote{For tridiagonal random matrices and for the ensemble of linear subgraphs we use the lowercase index $n$, while the sparse matrices and the corresponding ensemble of graphs are denoted by the capital letter $N$.} $A_n$ composed of $n - 1$ Bernoulli variables $x_i$ ($i=1,..., n-1$):
\be
A_n = \left(\begin{array}{ccccc} 0 & x_1 & 0 & \cdots & 0 \smallskip \\  x_1 & 0 & x_2 & & \smallskip \\  0 & x_2 & 0 & & \smallskip \\ \vdots &  &  &  & \smallskip \\ & & & & x_{n-1} \smallskip \\ 0 & & & x_{n-1} & 0 \end{array} \right), \qquad \mbox{where}\; x_i=\left\{\begin{array}{ll} 1 & \mbox{with probability $p$} \medskip \\
0 & \mbox{with probability $1-p$} \end{array} \right.
\label{def:tria}
\ee
It is easy to see that the matrix $A_n$ has a block-diagonal structure of the following type
\be
A=\left(\begin{array}{ccc} \boxed{\begin{array}{ccc} 0\, & 1\, & 0\, \vspace{-0.05cm} \\ 1\, & 0\, & 1\, \vspace{-0.05cm} \\ 0\, & 1\, & 0\, \end{array}} \hspace{-0.2cm} & & \\ & \boxed{\begin{array}{cc} 0\, & 1\, \vspace{-0.05cm} \\ 1\, & 0\, \end{array}} \hspace{-0.2cm} & \\ & & \boxed{\begin{array}{cccc} 0\, & 1\, & 0\, & 0\, \vspace{-0.05cm} \\ 1\, & 0\, & 1\, & 0\, \vspace{-0.05cm} \\ 0\, & 1\, & 0\, & 1\, \vspace{-0.05cm} \\ 0\, & 0\, & 1\, & 0\, \end{array}} \end{array}\right)
\ee
where each block is a perfect tridiagonal matrix $B_j$ of a size, $n_j$, and blocks are uniquely defined by a sequence of zeros in $x_i$ ($i=1,...,n-1$). The spectrum of tridiagonal Toeplitz (diagonal-constant) matrices of size $n$ is given by a formula:
\be
\lambda(k,n) = a + 2 \sqrt{bc} \cos\frac{\pi k}{n+1}, \quad k = 1 \ldots n,
\ee
where $a,b,c$ are the values on main, upper and lower diagonals, respectively. Any linear chain has the adjacency matrix of exactly the same form, with $a = 0$ and $b = c = 1$. Thus, the eigenvalues of each $B_j$ have the form
\be
\lambda_{k,j} = 2 \cos \frac{\pi k}{n_j + 1}, \quad  \mbox{where $k = 1, \ldots, n_j$}
\label{eq:tri1}
\ee
We will be interested in the largest eigenvalue, which, as follows from \eq{eq:tri1}, corresponds to the minimum of the cosine argument, i.e.
\be
\lev{}(n_j) = 2 \cos \frac{\pi}{n_j + 1}
\label{eq:tri2}
\ee
Inverting \eq{eq:tri2} we can express the block size, $n_j$, as a function of the largest eigenvalue, $\lev{}(n_j)$:
\be
n_j = \frac{\pi}{\arccos\frac{\lev{}}{2}} - 1.
\ee

Since the determinant of the block matrix $\det (A - \lambda I)$ is the product of the determinants of blocks $\det(B_j - \lambda I)$, the spectrum of matrix $A$ is the union of the spectra of submatrices. Thus, the largest eigenvalue of $A$ is 
\be
\levtri{} = 2 \cos\frac{\pi}{\max_j n_j+1},
\label{eq:levtri}
\ee
where $\max_j n_j$ denotes the maximum length of consecutive set of ``1'' in $A$. Hence, the statistics of the variable
\be
n_{\max}=\frac{\pi}{\arccos\frac{\levtri{}}{2}} - 1
\ee
is governed by the distribution of $\max_j n_j$. In other words, knowing the extreme value statistics of the maximum linear length, $n_j$, we shall know the distribution of $\lev{}$, and vice versa.

\section{\label{sec:lsr} Longest success run}

In the block-diagonal matrix $A_n$ we have denoted by $\max n_j$ the size of the maximal block. Since blocks are formed between zeros (called ``failures'') in Bernoulli tests, $\max n_j$ is the maximum length of consecutive successes, which we call below as the``longest success run'' (LSR) in the sequence $x_1, x_2, \ldots, x_{n - 1}$. More precisely, let $\LSR(n,p)$ (or just $\LSR$) is the length of the maximum sequence of successes among $n$ Bernoulli trials, where $p$ is the probability of having ``1'' (see \eq{def:tria}).

Let us focus on $n \gg 1$. Since $x_j$ are independent, the number of $k$ consecutive successes has the geometric distribution  
\be
\prob\{\xi = k\} = p^k(1-p),
\label{eq:geomxi}
\ee
that can be replaced by the appropriate exponential distribution $\exp(\alpha)$ when $n \to \infty$:
\be
\prob\{\xi = k\} = \int\limits_{k}^{k + 1} \alpha e^{-\alpha t} \diff t = e^{-\alpha k} \left(1 - e^{-\alpha}\right).
\label{eq:exp}
\ee
Comparing \eq{eq:exp} with \eq{eq:geomxi} we get $\alpha = -\ln p$. The average number of zeros in the sequence is $\floor{n (1-p)}$. Therefore we can estimate $\LSR$ as $\LSR = \max\left(\eta_1, ..., \eta_{\floor{n (1-p)}}\right) - \tfrac{1}{2}$, where $\eta_j$ are i.i.d. random variables from $\exp(-\ln p)$ and $\tfrac{1}{2}$ is the continuity correction term.

The problem is now reduced to the following one: what is the limiting distribution of maximum of $\floor{n (1-p)}$ independent exponentially distributed variables? Using the Fisher-Tippett-Gnedenko theorem (see, for example \cite{trippett}) and considering the survival function, one arrives at the celebrated Gumbel distribution. Namely, if $\{X_j\}_{j\in\N}$ are the i.i.d. exponential random variables with the parameter $\alpha$ and $Y_n = \max_{1\leq j \leq n} X_j$, then
\be
\lim_{n\to\infty} \prob\left\{\frac{Y_n - \alpha^{-1}\ln n}{\alpha^{-1}} \leq z\right\} = e^{-e^{-z}}
\ee
This implies when $n \to \infty$
\be
\prob\left\{\LSR(n,p) < z \right\} = e^{-e^{-\frac{z-\mu}{\beta}}}
\ee
where
\be
\mu = \log_{1/p}n(1-p)-\tfrac{1}{2}, \quad \beta =-\frac{1}{\ln p}.
\label{eq:LSRGumbelparams}
\ee
Using the well-known properties of the Gumbel distribution, we get
\be
\EX{\LSR(n,p)} = \log_{1/p}n(1-p)-\frac{\gamma}{\ln p}-\tfrac{1}{2},
\label{eq:meanLSR}
\ee
where $\gamma\approx 0.5772$ is the Euler–Mascheroni constant.

\section{Large-$n$ corrections to the largest eigenvalue of a random tridiagonal matrix}
\label{sec:gumbellin}

Now we turn back to the tridiagonal matrix $A_n$ defined in \eq{def:tria} and its largest eigenvalue $\levtri{}$. From \eq{eq:levtri} we have
\be
\levtri{} = 2 \cos\frac{\pi}{(\LSR+1) + 1},
\ee
where $\LSR$ is the maximum length of consecutive ``1'' on the subdiagonal of $A_n$ (the size of the corresponding block is $n+1$). Having the Gumbel distribution for $\LSR$, we can immediately derive the related distribution for $\lev{}$ of the random tridiagonal matrix:
\be
\prob\left\{\frac{\pi}{\arccos\frac{\levtri{}}{2}} < x\right\} = \prob\left\{\LSR \leq x-2\right\} = e^{-e^{-\frac{x-2-\mu}{\beta}}}.
\label{eq:tridiaglevCDF}
\ee
Taking the logarithm twice and substituting \eq{eq:LSRGumbelparams} into \eq{eq:tridiaglevCDF}, we get the linear function of $x$:
\be
\ln(-\ln F) = \left(x - \log_{1/p}n(1-p)-\tfrac{3}{2}\right)\ln p,
\label{eq:tridiagrescaledGumbel}
\ee
where $F=F(x)$ is the cumulative probability function (CDF) in the LHS of \eq{eq:tridiaglevCDF}. Now, using \eq{eq:meanLSR} we can find the mean value
\be
\EX{\left\{\frac{\pi}{\arccos\frac{\levtri{}}{2}}\right\}} = \log_{1/p}n(1-p)-\frac{\gamma}{\ln p}+\tfrac{3}{2}.
\label{eq:tridiagmeanx}
\ee

Equations \eq{eq:tridiagrescaledGumbel} and \eq{eq:tridiagmeanx} are in excellent agreement with the results of numeric simulations. Corresponding plots are shown in \fig{fig:meantri}.
\begin{figure}[ht]
    \centering
    \subfloat[Sampled CDF of $\pi/\arccos(\lev{}/2)$ for $n=10^6$ and $p=0.5$, number of bins is $1000$]{ 
        \includegraphics[width=0.46\columnwidth]{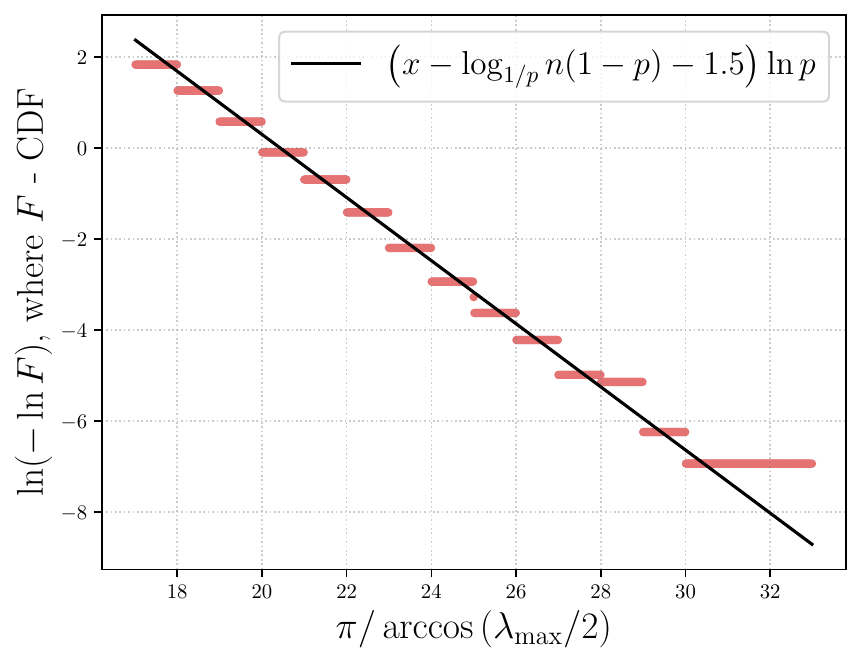}
    }\hfill%
    \subfloat[Mean $\pi/\arccos(\lev{}/2)$ sampled for different $n$ and $p$]{
        \includegraphics[width=0.48\columnwidth]{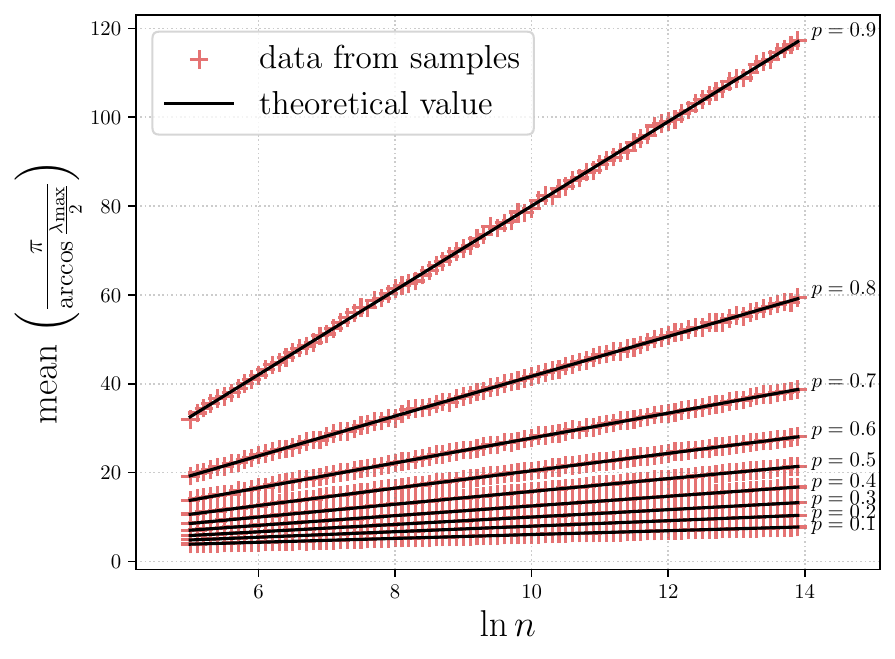}
    }
    \caption{\label{fig:meantri} Comparison of analytic and numerical results for equations \eq{eq:tridiagrescaledGumbel} and \eq{eq:tridiagmeanx} based on $10^5$ tridiagonal samples for each pair of $n$ and $p$.}
\end{figure}

It should be pointed out that the asymptotic expression \eq{eq:tridiagmeanx} is consistent with the one obtained in Introduction via the naive estimation of $\EX\left(\lev{}\right)$ based on the analysis of eigenstates in the Lifshitz tail of the density $\rho(\lambda)$ near the spectral edge in the sparse matrix ensemble -- see \eq{eq:11}. Expanding $\pi/\arccos(\lev{}/2)$ near the spectral edge of linear chains, $\lev{}=2$, we find:
\be
\frac{\pi}{\arccos\frac{2-\delta}{2}}\bigg|_{\delta\to 0} \approx \frac{\pi}{\sqrt{\delta}}-\frac{\pi \sqrt{\delta}}{24} + O\left(\delta^{3/2}\right) 
\label{eq:asymp}
\ee
Substituting \eq{eq:asymp} into \eq{eq:tridiagmeanx}, we get for $\delta$ the following expression
\be
\delta = \left.\frac{\pi^2}{\left(\log_{1/p}n(1-p) - \frac{\gamma}{\ln p} + \frac{3}{2}\right)^2}\right|_{n\gg 1} \approx \frac{\pi^2}{\log^2_{1/p}n} = \frac{\pi^2\ln^2 p}{\ln^2 n}
\label{eq:estim}
\ee
Comparing \eq{eq:estim} and \eq{eq:11} one sees that both expressions have the same dependence on $n$. The non-rigorous nature of derivation in Introduction does not permit to rely on coefficients in \eq{eq:11}, while \eq{eq:estim} provides correct answer (confirmed numerically) in the large-$n$ limit.

\section{Spectra of sparse \ERname{} graphs}

Let $G \sim \fancyG(N, p)$ be a random \ERname{} graph with $N$ vertices and the probability $p$ of an edge formation. Here is a brief recap of how the structure of $G$ evolves with the increase of $p$. At $p < p_c = \frac{1}{N}$ linear chains statistically suppress branching graphs, so the spectrum of $G$ is entirely determined by linear chains and lies within the interval $[-2, 2]$ -- see \fig{fig:samples}a plotted at $p = \frac{0.5}{N}$. As we approach the percolation point, subgraphs with $z>2$ branchings start to contribute, and the giant component increases in size becoming of order of $N^{2/3}$ at the percolation point. The corresponding regime is depicted in \fig{fig:samples}b at $p = \frac{1.5}{N}$ i.e. slightly above the percolation point. As $p$ increases further, the giant component continues to grow, crowding out all other subgraphs. At the point $p_c^* = \frac{\ln N}{N}$ a cloud of short linear chains and isolated vertices floats around, and at $p > (1+\eps) p_c^*$ for any $\eps>0$ the graph $G$ almost surely becomes connected in the thermodynamic regime. The corresponding phases are illustrated in \fig{fig:samples}c where $p \approx \frac{\ln N}{N}$.

\begin{figure}
    \centering
    \includegraphics[width=0.32\linewidth]{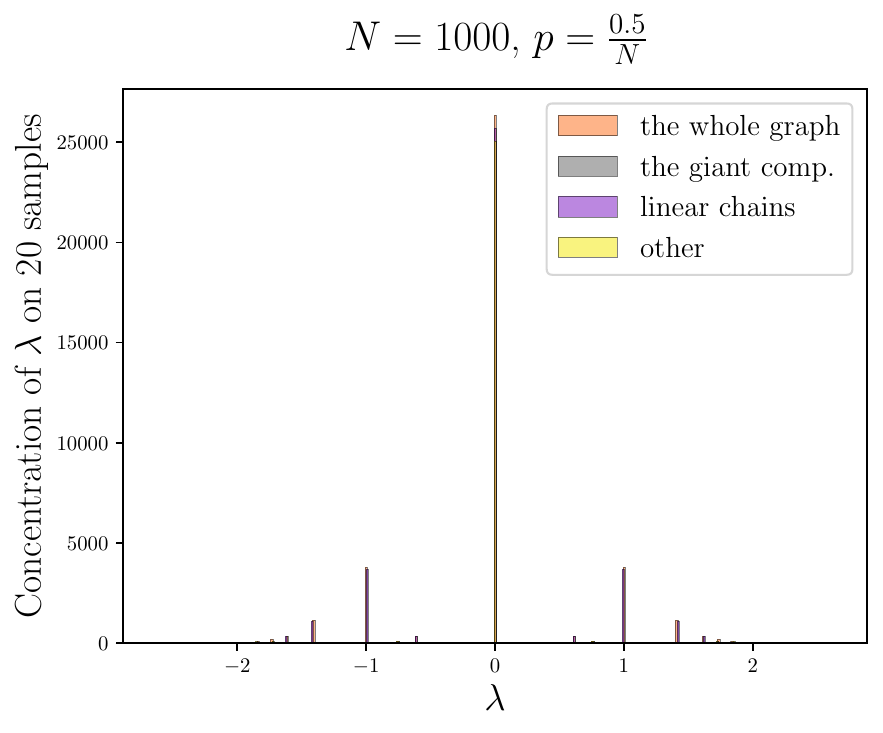}\hfill
    \includegraphics[width=0.32\linewidth]{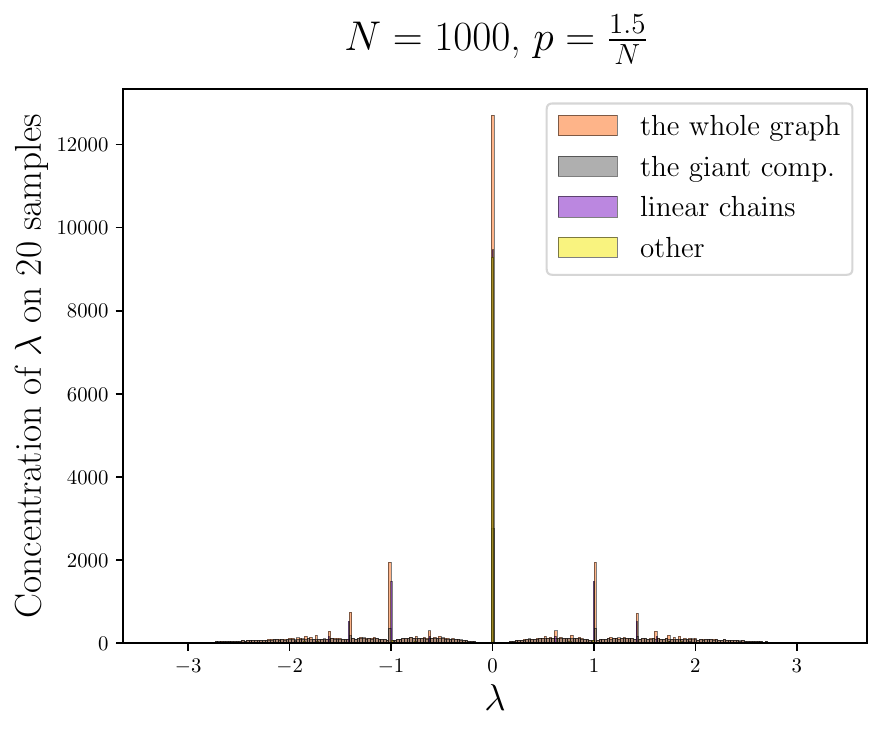}\hfill
    \includegraphics[width=0.32\linewidth]{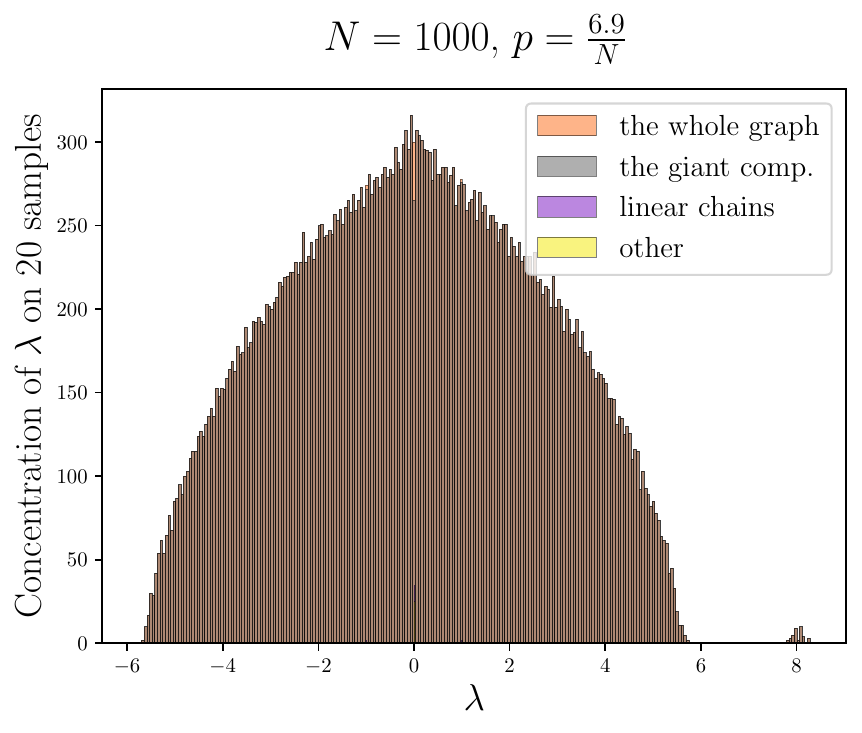}
\caption{Spectrum of ${\cal G}(N, p)$ grouped by types of subgraphs.}
\label{fig:samples} 
\end{figure}

\subsection{Contribution of linear subgraphs}

In the case when linear subgraphs dominate (see \ref{sec:linspec}) the largest eigenvalue of the random adjacency is determined by the maximal length of a chain. In \cite{Avetisov2015} it has been shown that in the vicinity of the percolation threshold $p_c=1/N$ linear subchains provide the dominant contribution to ${\cal G}$ and are exponentially distributed with the law $P(L) \sim e^{-L}$, where $L$ is the subchain length. As we know from Section \ref{sec:gumbellin}, that leads to the Gumbel distribution of the variable
\be
\varnlin{}=\frac{\pi}{\arccos\frac{\levlin{}}{2}},
\label{eq:quanlinER}
\ee
where $\levlin{}$ is the largest eigenvalue for linear chains separated from all other components in the \ERname{} graph $G$. The spectrum of any graph is the union of spectra of its connectivity components. Since the largest eigenvalue $\lev{}$ in $G$ can be determined by a non-linear component, the following natural question emerges: could the distribution of $\lev{}$ still be estimated if the graph topology is known?

There are two upper bounds for $\lev{}$ depending on the maximum vertex degree (or ``branching'') $\dmax{}$. One can be applied to any graph
\be
\lev{} \leq \dmax{} \quad \text{(any graph)},
\label{eq:anytypebound}
\ee
whereas another one is about trees (\cite{Stevanovic2003}),
\be
\lev{} \leq 2\sqrt{\dmax{} - 1} \quad \text{(tree)}.
\label{eq:treetypebound}
\ee
Both inequalities \eq{eq:anytypebound} and \eq{eq:treetypebound} are sharp: the first one becomes equality for a complete graph and the second one -- for an infinite regular tree. For linear chains upper bounds coincide, since $\dmax{} =  2\sqrt{z-1}$ for $z=2$. Now we are in position to formulate the main conjecture: 

\noindent {\bf Conjecture.} {\it Taking into account that the normalization constant $2$ in the denominator of \eq{eq:quanlinER} is the spectral boundary of ensemble of linear graphs, to extend our consideration beyond the ensemble of linear graphs, let us replace $2$ in the denominator of \eq{eq:quanlinER} by the spectral boundary $\eignorm{}$ for ensemble of sparse graphs generated at some value $p$:
\be
n=\frac{\pi}{\arccos\frac{\lev{}}{\eignorm{}}}
\label{eq:CquanER}.
\ee
}

Gumbel distribution traced for $\varnlin{}$ in \eq{eq:quanlinER} for linear subgraphs motivates to suppose its validity in \eq{eq:CquanER} even beyond the linear case. Since components in the \ERname{} model are not obliged to be trees even at the percolation point (the probability $\prob\{\text{no cycle in}\ G\}$ tends to $0$ when $N \to \infty$ as it is proved in \cite{Erdos1960}), it is better to use the first inequality \eq{eq:anytypebound} to find an appropriate normalization constant $\eignorm{}$ in \eq{eq:CquanER}. Thus, we end up with the question of finding the best estimate for $\dmax{}$ in ensemble of sparse graphs above the percolation threshold.

\subsection{Maximum degree bounds}

Every vertex degree is a sum of elements in the corresponding row of the adjacency matrix. There are many inequalities estimating tails of sum of Bernoulli variables. However, Chernoff inequality (\cite{Vershynin2018}) turns out to be an effective tool, when the mean value of sum is $O(1)$.
\begin{proposition}
    \label{prop:upperbounddmax}
    Consider a random graph $G \sim \fancyG(N, 1/N)$. Then for any $\eps \in (0,1)$ and $\delta>0$ there is a positive constant $M$ such that for any $N > M$ the following inequality is valid: 
    \begin{equation*}
       \prob\left\{\exists i \in G: d_i \geq (1+\delta)\frac{\ln N}{\ln \ln N}\right\} \leq \eps,
    \end{equation*}
    where $d_i$ is the degree of node $i$. In other words, for any $N > M$ with the probability at least $1 - \eps$ all vertices in $G$ have degree less than $(1+\delta)\ln N/\ln\ln N$.
\end{proposition}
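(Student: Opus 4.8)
The plan is to bound the probability over the whole graph by a union bound over vertices, reducing everything to a single-vertex large-deviation estimate controlled by the Chernoff inequality. First I would fix the threshold $t := (1+\delta)\frac{\ln N}{\ln\ln N}$ and observe that the degree of any fixed vertex $i$ is $d_i = \sum_{j\ne i} a_{ij}$, a sum of $N-1$ independent Bernoulli$(1/N)$ variables, so $d_i$ is Binomial$(N-1,1/N)$ with mean $\mu = (N-1)/N < 1$. Since the events $\{d_i \geq t\}$ are identically distributed, the union bound gives
\begin{equation*}
\prob\left\{\exists i \in G: d_i \geq t\right\} \leq N\,\prob\{d_1 \geq t\},
\end{equation*}
and it suffices to show the right-hand side tends to $0$ as $N\to\infty$; the statement with $\eps$ and $M$ then follows at once from the definition of a limit.

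Next I would apply the upper-tail Chernoff bound for sums of independent Bernoulli variables (\cite{Vershynin2018}): for $t>\mu$ one has $\prob\{d_1 \geq t\} \leq \left(\frac{e\mu}{t}\right)^{t}$. Using $\mu<1$ this simplifies to $\prob\{d_1 \geq t\} \leq \left(\frac{e}{t}\right)^{t}$, so that
\begin{equation*}
N\,\prob\{d_1 \geq t\} \leq \exp\!\left(\ln N + t - t\ln t\right).
\end{equation*}
The whole argument now reduces to showing that this exponent diverges to $-\infty$.

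The main obstacle, and the only genuinely delicate step, is the asymptotic bookkeeping of the exponent. Writing $\ln t = \ln(1+\delta) + \ln\ln N - \ln\ln\ln N$, I would expand $t\ln t = (1+\delta)\ln N\,\bigl(1 + o(1)\bigr)$, the leading term arising from the product of $t$ with $\ln\ln N$. Care is needed because the correction $t\bigl(\ln(1+\delta)-\ln\ln\ln N\bigr)$ is of order $\ln N \cdot \frac{\ln\ln\ln N}{\ln\ln N}$, which, although it grows, is still $o(\ln N)$, and likewise $t = o(\ln N)$. Collecting the terms yields
\begin{equation*}
\ln N + t - t\ln t = -\delta\ln N + o(\ln N) \longrightarrow -\infty,
\end{equation*}
hence $N\,\prob\{d_1\geq t\}\to 0$. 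The factor $(1+\delta)$ in the threshold is precisely what turns the leading $\ln N$ produced by $t\ln t$ into a strictly negative multiple $-\delta\ln N$; this is where the hypothesis $\delta>0$ enters and why $\frac{\ln N}{\ln\ln N}$ is the correct scale. I would conclude by noting that convergence to $0$ is exactly the claim that for every $\eps\in(0,1)$ there exists $M$ with $N\,\prob\{d_1\geq t\}\leq\eps$ for all $N>M$.
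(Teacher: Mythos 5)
Your proposal is correct and follows essentially the same route as the paper's proof: a union bound over the $N$ vertices, the Chernoff upper-tail bound $N\prob\{d_1\geq t\}\leq\exp(\ln N+t-t\ln t)$ using $\mu=(N-1)/N\leq 1$, and the expansion $\ln t=\ln(1+\delta)+\ln\ln N-\ln\ln\ln N$ to show the exponent behaves as $-\delta\ln N+o(\ln N)\to-\infty$. The only cosmetic difference is that the paper writes the exponent explicitly as $b(N,\delta)=\frac{\ln N}{\ln\ln N}\bigl[-\delta\ln\ln N+(1+\delta)(\ln\ln\ln N+1-\ln(1+\delta))\bigr]$ (which it later reuses for numerical estimates) rather than absorbing the subleading terms into $o(\ln N)$.
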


\begin{proof}
Let us use denote by $i$ vertices in $G$ and by $d_i$ the corresponding vertex degree. We begin with the following bound:
\be
\prob\left\{\exists i: d_i \geq t\right\} \leq N \prob\left\{d_i \geq t\right\},
\ee
where $t = \frac{(1+\delta)\ln N}{\ln\ln N}$\smallskip. Certainly $\mu \equiv \EX{d_i} = \frac{N-1}{N} \leq 1$ and using the Chernoff inequality we get
\be
\prob\left\{\exists i: d_i \geq t\right\} \leq N e^{-\mu}\left(\frac{e\mu}{t}\right)^t \leq N \left(\frac{e}{t}\right)^t \leq \exp\left(\ln N + t - t \ln t\right) \leq \exp[b(N,\delta)].
\label{eq:prop1ineq}
\ee
We are about to show that the last expression tends to $0$, since the expression in the brackets tends to $-\infty$. The explicit form of $\ln t$ is 
\be
\ln t = \ln (1+\delta) + \ln\ln N - \ln\ln\ln N.
\ee
Substituting this expression in \eq{eq:prop1ineq} and rearranging the terms, we get the exponent
\be
b(N,\delta)=\frac{\ln N}{\ln\ln N}\bigl[-\delta\ln\ln N + (1+\delta)(\ln\ln\ln N + 1-\ln(1+\delta))\bigr].
\label{eq:prop1exponent}
\ee
Fixing any positive $\delta$, the whole expression in \eq{eq:prop1exponent} tends to $-\infty$, since the leading term inside brackets is $-\delta\ln\ln N$. Returning to \eq{eq:prop1ineq} we see that the upper bound tends to $0$. Namely, starting with some positive $M$ the probability of having a node with the degree $\geq t$ will be less than $\eps$. In other words, all degrees in this expression are less than $t$ with probability bigger than $1-\eps$.
\end{proof}
Surely, if all degrees are less than some value, it does not necessarily mean that there are some degrees close to that value. 
The upper bound \eq{eq:prop1ineq} with the explicit expression for the exponent \eq{eq:prop1exponent} reveals the interplay between $N, \delta$, and desired probability, $1-\eps$. This becomes important when one generates any \emph{finite} collection of graphs $G\sim\fancyG(N, \frac{1}{N})$ with \emph{finite} $N$. We address this question in next Section.

\subsection{Gumbel statistics related to $\lev{}$ in sparse graphs}

Generally, if $E$ is an event happen with the probability, say, $1/1000$, then in practical computations we can expect detecting about one such event, $E$, among $1000$ independent samples. This implies that if we work with $N$ and $\delta$ such that the upper bound \eq{eq:prop1ineq} is $\ll 1/1000$, we can neglect the occurrence of $E$ among $1000$ independent samples. The Table \ref{tab:ubnumerical} provides upper bounds for different values of $N$ and $\delta$. Decreasing $\delta$ by $0.5$ leads to increasing the upper bound by several orders of magnitude. So, to reduce the upper bound one need to increase $N$ significantly. For example, one can generate about 1000 graphs $G$ of sizes $N \times N$ (say, $N>10^5$) and with a high probability there will be no vertices with the degree bigger than $\frac{3\ln N}{\ln\ln N}$ in the whole collection.

\begin{table}[ht]
\begin{tabular}{ccc}
\hline \hline
$N$ & \hspace{1cm} $\delta$  & \hspace{1cm} $\exp(b(N,\delta))$ \\ \hline \hline
$10^5$ & \hspace{1cm} $2$  & \hspace{1cm} $\approx 8 \times 10^{-6}$ \\ \hline
$10^5$ & \hspace{1cm} $1.5$ & \hspace{1cm} $\approx 4 \times 10^{-3}$ \\ \hline \hline
$10^6$ & \hspace{1cm} $2$   & \hspace{1cm} $\approx 9 \times 10^{-7}$ \\ \hline
$10^6$ & \hspace{1cm} $1.5$ & \hspace{1cm} $\approx 1 \times 10^{-3}$ \\ \hline
\end{tabular}
\caption{Numerical values of \eq{eq:prop1ineq}. Graph $G\sim\fancyG(N, 1/N)$, $\prob\left\{\exists i \in G: d_i \geq (1+\delta)\ln N / \ln\ln N \right\} < \exp(b(N,\delta))$ (see Proposition \ref{prop:upperbounddmax}).}
\label{tab:ubnumerical} 
\end{table}

Depending on the number of samples one can choose the appropriate value of $\delta$ providing the upper bound for $\dmax{}$ with a high probability. Since $\delta$ is of order $1$ and there is a probabilistic gap between $\lev{}$ and $\dmax{}$ depending on exact topology of each sampled graph, one can assume that $\frac{\ln N}{\ln\ln N}$ is a good bound in the large-$N$ limit. That motivates us to update the normalization constant in \eq{eq:CquanER} with the new norm and get a new quantity:
\begin{equation}
    x = \frac{\pi}{\arccos\left(\lev{}/\frac{\ln N}{\ln\ln N}\right)}
    \label{eq:lnlnquanER}.
\end{equation}

Below we provide numerical arguments in support of the hypothesis that not only the largest eigenvalue $\levlin{}$ in the ensemble of {\it linear chains}, but also the largest eigenvalue $\lev{}$ in the ensemble of \emph{sparse graphs} shares the Gumbel distribution at least in the vicinity of the percolation point, $p_c = 1/N$. We have seen in \eq{eq:tridiaglevCDF} that the Gumbel statistics implies CDF of the form
\begin{equation}
    F(x) = e^{-e^{-\frac{x-\mu}{\beta}}}.
    \label{eq:gumbel}
\end{equation}
Our check depicted in \fig{fig:threeNexperiment} is based on the numerical verification of linearity in doubly logarithmic coordinates of the cumulative distribution function $F(x)$. 
\begin{figure}[ht]
    \centering
    \includegraphics[width=0.33\linewidth]{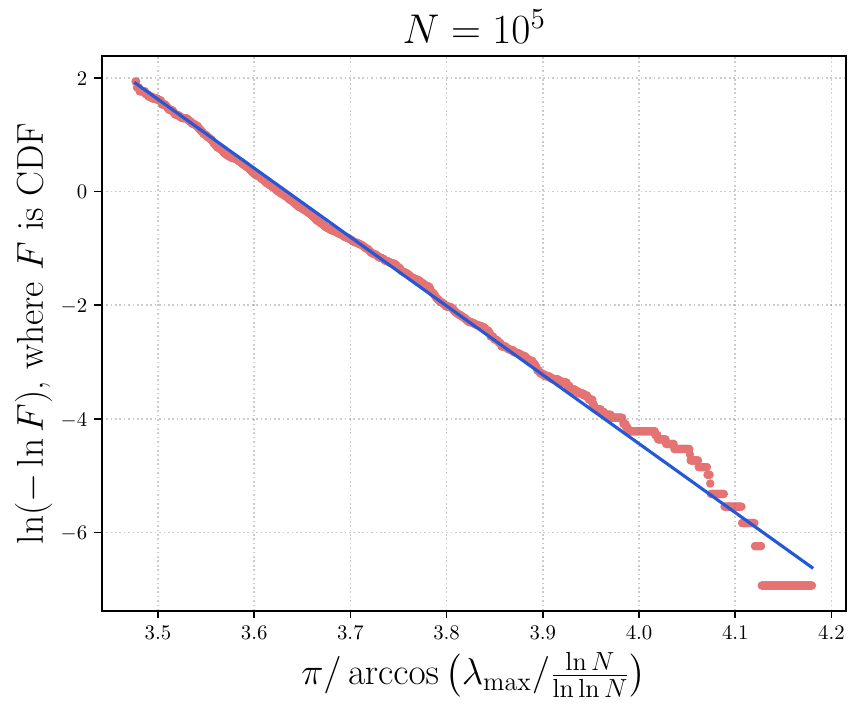}\hfill
    \includegraphics[width=0.33\linewidth]{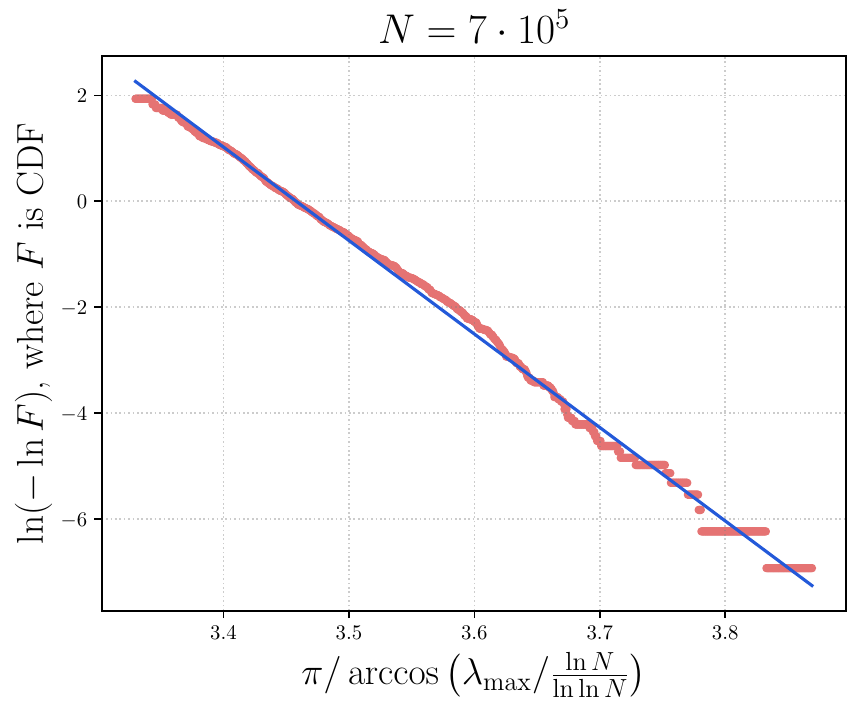}\hfill
    \includegraphics[width=0.33\linewidth]{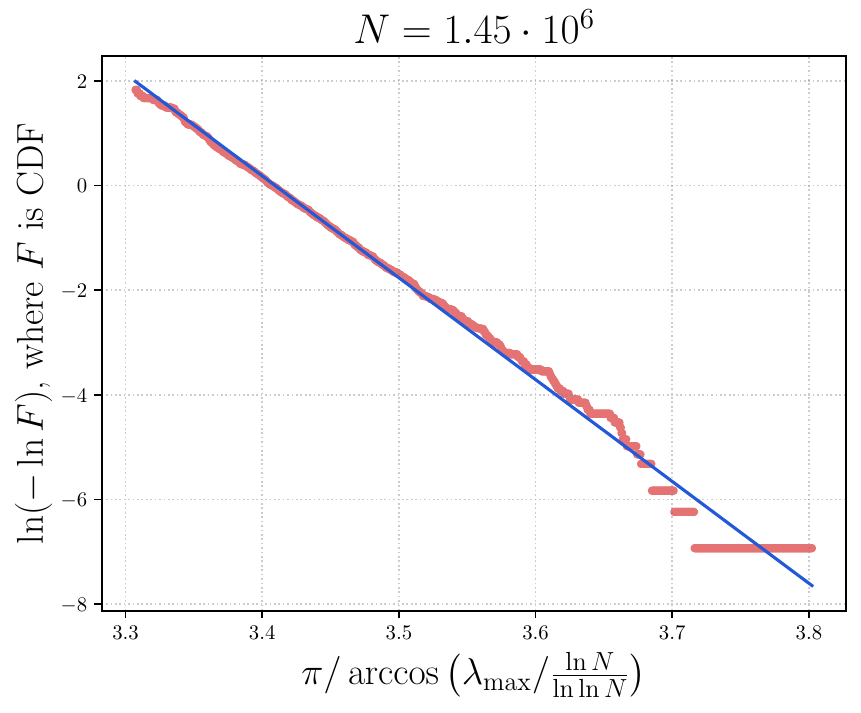}
    \caption{Numerical cumulative distribution function (CDF) $F$ based on $1024$ randomly sampled \ERname{} graphs with edge probability equal $1/N$. Number of bins is $1000$. Numerical value of $F(e)$ at the bin edge $e$ defined as number of sampled $x$ values $<{e}$ divided by the number of samples. Bin edges where $F$ is $\leq \eps_F=10^{-9}$ or $\geq 1-\eps_F$ were excluded.}
    \label{fig:threeNexperiment}
\end{figure}

For each $N$ we generate $1024$ graph samples with the edge probability $1/N$ and calculate $1024$ values of $x$ from \eq{eq:lnlnquanER}. Then we divide the region $[\min x - \eps_x, \max x + \eps_x]$ into $1000$ bins, where $\eps_x = 10^{-9}$ (this number is much less than any value of $x$) and find $\min/\max$ among sampled $x$ (for each $N$ individually). Now we have a collection of tuples $(e, F(e))$, where $e$ is the bin edge, $F({\rm e})$ is the number of sampled $x$ values less than $e$ divided by the number of samples (this is the definition of the numerical cumulative distribution function (CDF)). Since our CDF at first and last bins can be $0$ and $1$, the value of $\ln (-\ln F)$ is not defined at that points and we consider only bin edges where numerical CDF is $>\eps_F$ and $<1-\eps_F$ ($\eps_F = 10^{-9}$). Noise at tails in \fig{fig:threeNexperiment} is a natural consequence of having rare ``superlarge'' or ``supersmall'' sampled $x$ where numerical values of $F$ do not follow the main trend because of the lack of samples around these $x$.


\subsection{Empirical choice of normalization constant}

Here we suggest the numerical procedure which permits to choose the desired norm $\eignorm{}$ in \eq{eq:CquanER}. Let us scan all possible $\eignorm{}$, calculate residuals of the linear fit and choose the smallest one among them. This prescription provides the value corresponding of the correct norm for our particular collection of samples. 

Let us first test how this algorithm works for the tridiagonal case (i.e. for linear subgraphs). For small $p$ fluctuations of $\max$ length of the continuous sequence of ones are of the order of its length. Meanwhile for large $p$ the $\max$ length becomes of order of $N$ and different chains of ones start to affect the statistics of each other. Recall that we established the Gumbel distribution \eq{eq:tridiaglevCDF} when $N \to \infty$ at fixed $p$. So, for large $p$ we need to increase $N$ to be able to simulate the thermodynamic limit. Since it only affects our computational resources we set for simplicity $p = 0.5$. Below we provide the results of numerical investigation of $\levtri{}$ norm for linear chains. The corresponding plots are shown in \fig{fig:bestnormtridiag}. Normalization constant $\eignorm{}$ in \fig{fig:bestnormtridiag} is changing within the interval $[1.99,2.10]$ with the step $10^{-5}$. Since any $\eignorm{}$ must be greater than the sampled $\max\levtri{}$, we skip values of $\eignorm{}$ less than $(\max\levtri{}-\eps_{\lambda})$, where $\eps_{\lambda}=10^{-9}$. For each $N$ and $\eignorm{}$ we repeat the algorithm described at the end of previous Section. We observe a narrow fall of the residual until local minimum is reached. The local minimum is followed by a steady sloping growth. The best norm fluctuates around $2$, however not always it is exactly equal to 2. It happens because the sampled finite statistics is not sufficient for finding the true normalization constant.

\begin{figure}[ht]
    \centering
    \includegraphics[width=0.33\linewidth]{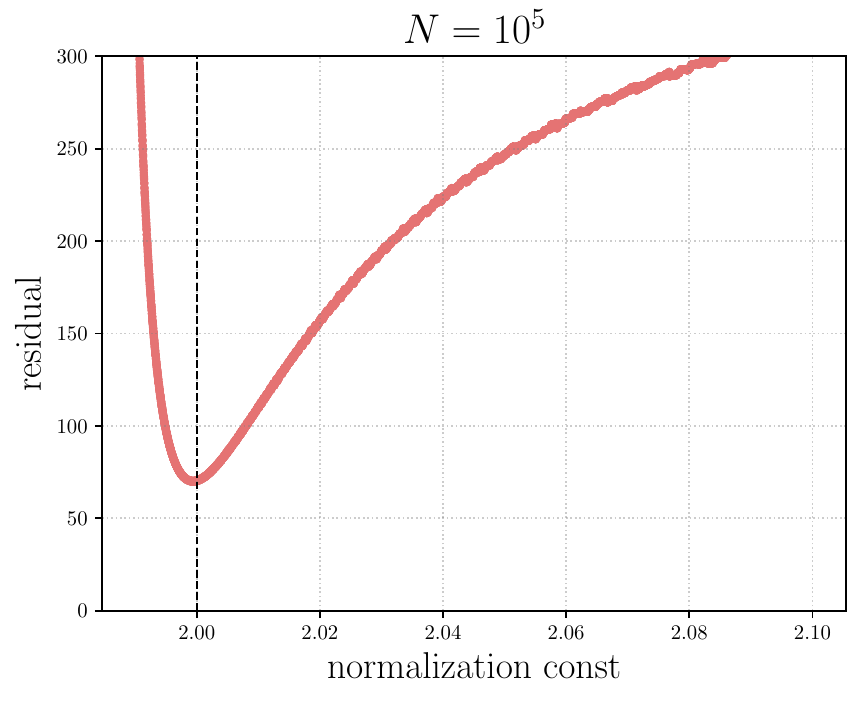}\hfill
    \includegraphics[width=0.33\linewidth]{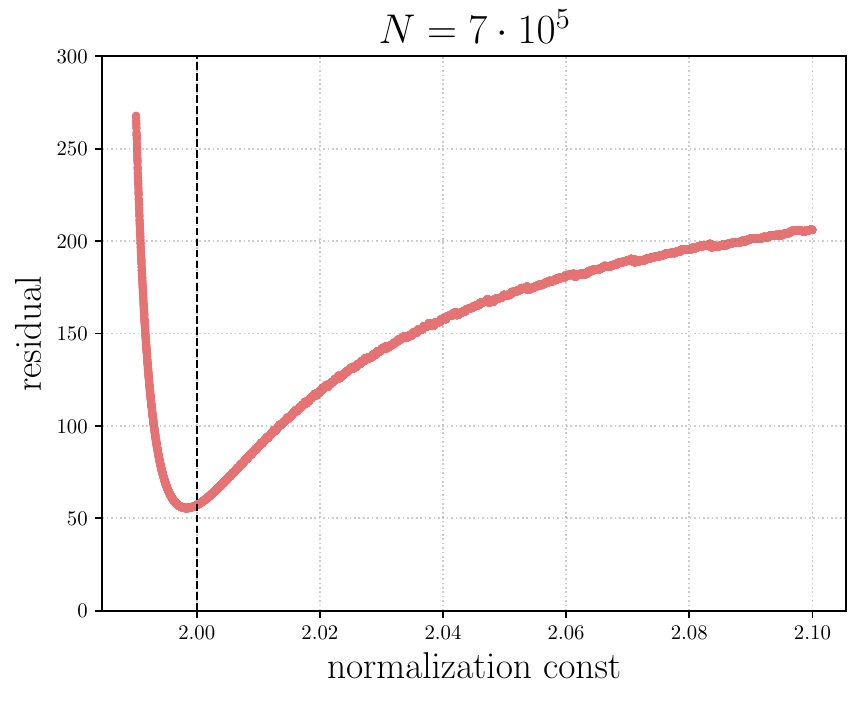}\hfill
    \includegraphics[width=0.33\linewidth]{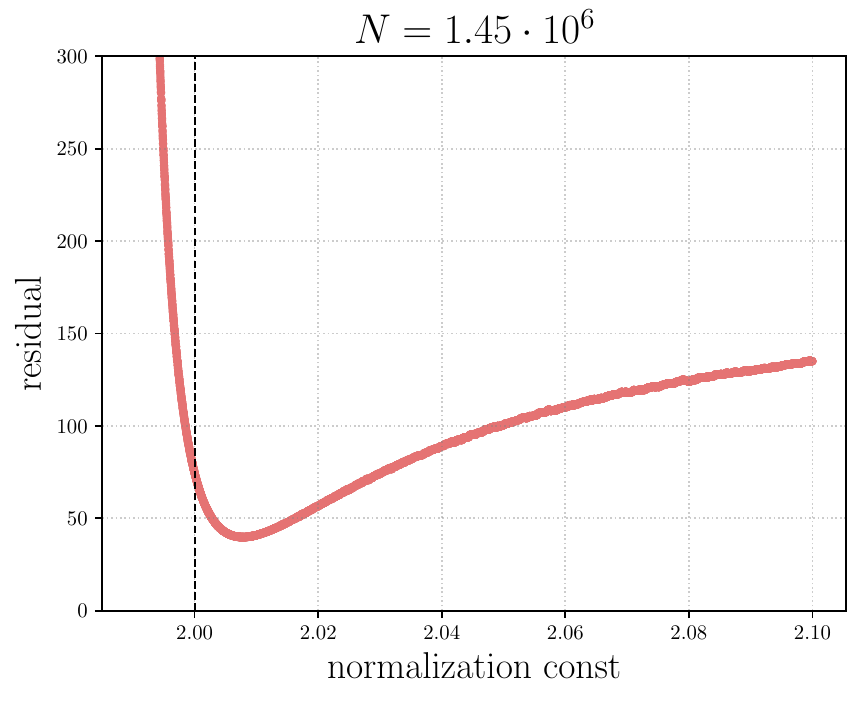}
    \caption{\label{fig:bestnormtridiag}
        Linear fit residuals of $\ln(-\ln F)$, where $F$ is the numerical CDF of sampled $\pi/\arccos{(\levtri{}/\eignorm{})}$ in the tridiagonal case for different normalization constant. $1024$ samples for each $N$.
    }
\end{figure}

Turning to sparse graph simulations one should take into account two technical circumstances. First, for a fixed $N$ the configurational space of sparse graph ensemble is much bigger than that of linear chains. Second, it is much heavier computational task to calculate the largest eigenvalue of a sparse matrix than of a tridiagonal one. Figure \ref{fig:bestnormsparse} repeats for sparse matrices the construction shown in \fig{fig:bestnormtridiag} for tridiagonal case.

\begin{figure*}[ht]
    \centering
    \includegraphics[width=0.33\linewidth]{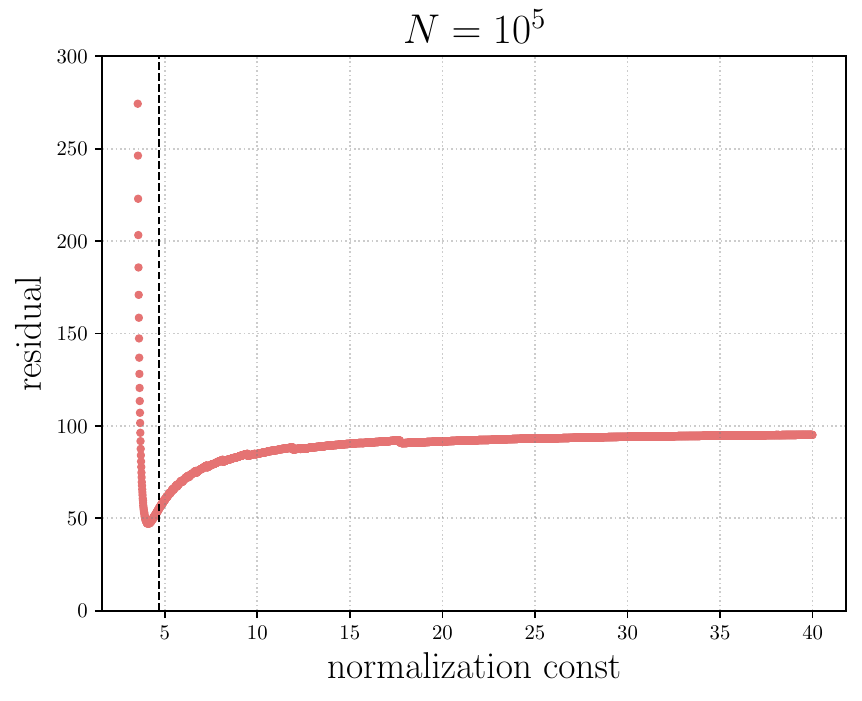}\hfill
    \includegraphics[width=0.33\linewidth]{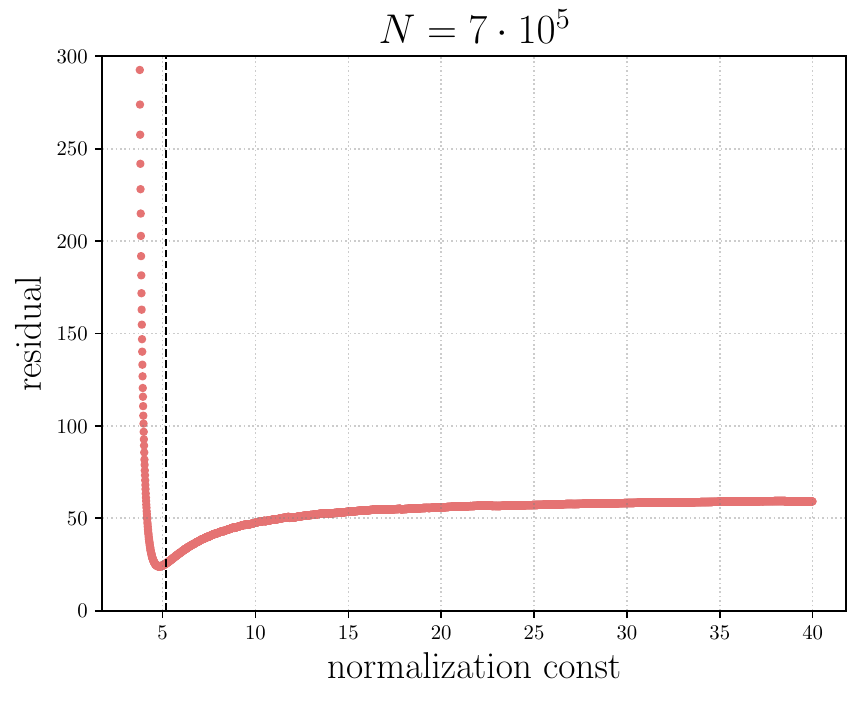}\hfill
    \includegraphics[width=0.33\linewidth]{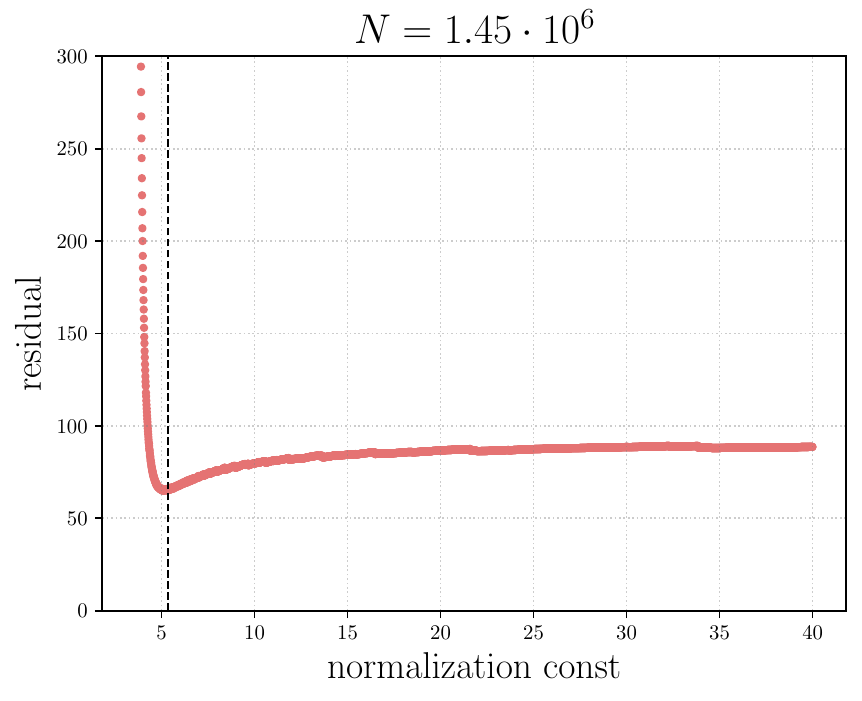}
    \caption{\label{fig:bestnormsparse}
        Linear fit for residuals of $\ln(-\ln F)$, where $F$ is the numerical CDF of sampled $\pi/\arccos{(\lev{}/\eignorm{})}$ in the sparse case for different normalization constants. Dashed vertical lines are at $\eignorm{}=\ln N/\ln\ln N$. We have generated $1024$ random samples from $\fancyG(N,1/N)$ for each $N$.
    }
\end{figure*} 

We iterate $\eignorm{}$ with a step $10^{-2}$ over the interval $[3, 40]$ skipping values that are less than $(\max\levtri{}-\eps_{\lambda})$. Residuals in \fig{fig:bestnormsparse} follow the same pattern which we had for linear subgraphs: sharp narrow decay till the minimum fluctuating around $\eignorm{} = \ln N/\ln \ln N$ followed by a steady growth. The choice of $\eignorm{}$ as $\ln N/\ln \ln N$, but not as $\sqrt{\ln N/\ln \ln N}$ as suggested in \cite{ks2003} is discussed below.

Repeating many times random sampling, sometimes we do not see the minimum around $\eignorm{}$, where it typically occurs in \fig{fig:bestnormsparse}. This happens because of insufficient number of samples. Generating additional set of graphs, the minimum emerges and becomes more profound (see \fig{fig:moresamplessparse}). The wide plateau of residuals in \fig{fig:moresamplessparse} means that in this region the results are practically insensitive to the normalization constant $\eignorm{}$, signaling that values of $\eignorm{}$ of order of $\ln N/\ln \ln N$ are as good as the ones of order of $\sqrt{\ln N/\ln \ln N}$.

\begin{figure}[b]
    \centering
    \includegraphics[width=0.43\columnwidth]{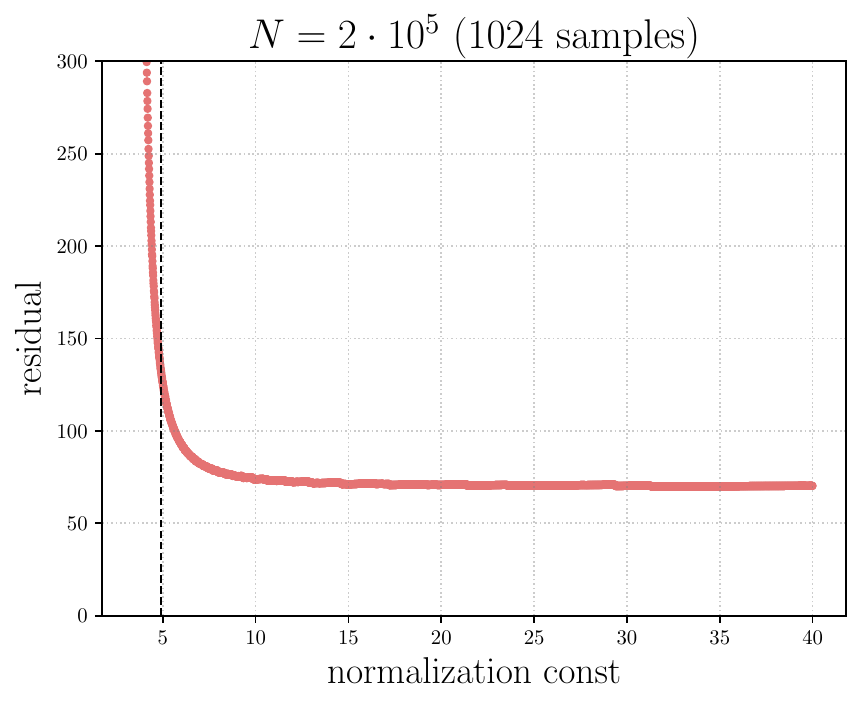}\hfill
    \includegraphics[width=0.43\columnwidth]{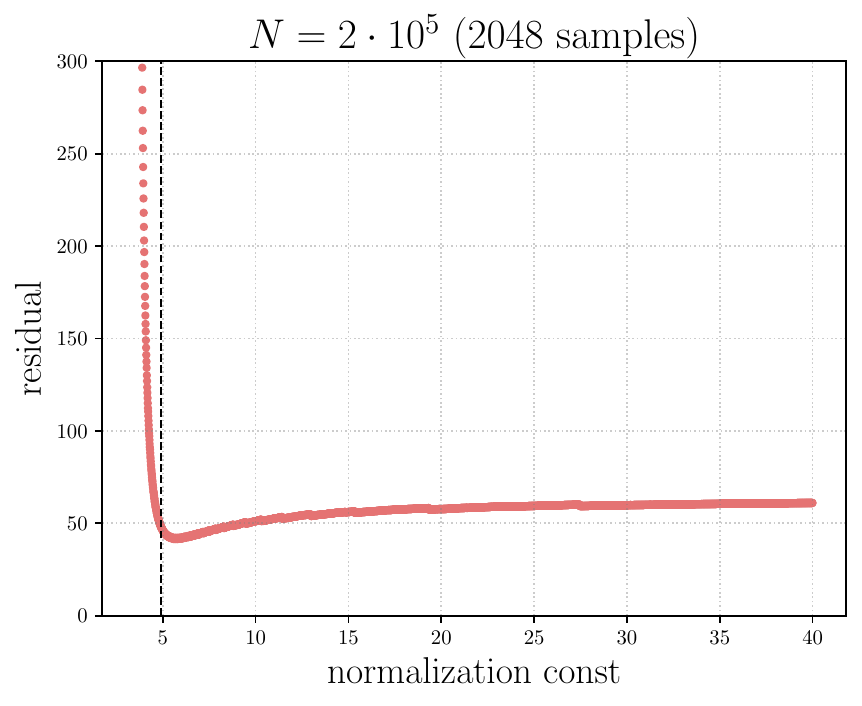}
    \caption{\label{fig:moresamplessparse}
    Increasing number of samples leads to a pronounced local minimum corresponding the best normalization constant $\eignorm{}$ in \eq{eq:CquanER}. Dashed vertical lines are at $\eignorm{}=\ln N/\ln\ln N$. When the sampled statistics is not enough for minimum to occur, than $\ln N/\ln\ln N$ happens to be in a region, where the plateau starts.}
\end{figure}

It is worth mentioning that the estimate $\lev{}(N) = (1+o(1))\sqrt{\ln N/\ln\ln N}$ derived in \cite{ks2003} is the best known estimate of the asymptotics of the largest eigenvalue in a sparse graph ensemble. However in our numerical simulations this estimate cannot be used as a normalization constant. The reason is as follows: for any {\it finite} $N$ and {\it finite} set of samples many graphs have $\lev{} > \sqrt{\ln N/\ln\ln N}$ which means that we can not extract $N$ from \eq{eq:CquanER} with $\eignorm{}=\sqrt{\ln N/\ln\ln N}$. As one sees from \fig{fig:bestnormsparse} (where $\sqrt{\ln N/\ln\ln N} \approx 2.31$ for $N = 1.45\times 10^6$) decreasing the norm from the lowest point around $\ln N / \ln \ln N$ forces a very rapid growth of the residual in our fit. Since the logarithm and the root of it are very slowly increasing functions one needs extremely large $N$ to distinguish between them, however still the lack of samples may influence the residual dependency on the normalization constant.

To summarise, $\lev{}$ (which is not bigger than $\dmax{}$) and $\dmax{}$ itself are two random variables which are equal $\sqrt{\ln N/\ln\ln N}$ and $\ln N/\ln\ln N$ in the thermodynamic limit. However, since we are interested in statistics of $\lev{}$ for a finite $N$ and finite set of samples, we may meet hypothetically the situation where $\lev{} = \dmax{} = N-1$. Wondering which appropriate norm should be chosen for $\lev{}$, we address to \eq{eq:prop1ineq}. Choosing the desirable $\delta$, one finds such $N$ that $\prob\left\{\exists i: d_i \geq (1+\delta)\ln N /\ln \ln N\right\}$ would be much less than $1/\text{number of samples}$. That would give an effective upper bound for $\dmax{}$ and, as a result, for $\lev{}$.

\section{Conclusion}

We have analyzed semi-analytically -- semi-numerically the statistics of eigenvalues in the vicinity of the spectral boundary of large sparse random adjacency matrices with the bimodal distribution of matrix elements, $a_{ij}$, i.e.: $P(a_{ij}=1)=p$ and $P(a_{ij}=0)=1-p$ where $p=c/N$ and $c$ is close to 1.   

We have shown that the Gumbel distribution emerges for the largest eigenvalue of tridiagonal matrices (see \eq{eq:tridiaglevCDF}), which are adjacency matrices of linear subgraphs (see \eq{eq:quanlinER}). Based on this anslysis we have proposed an ansatz for the distribution of the largest eigenvalue in the ensemble of sparse adjacency matrices and have checked numerically its validity using the variational approach. Specifically, we have demonstrated that if the limiting value of the largest eigenvalue in the ensemble of tridiagonal matrices is replaced by $\eignorm{}\approx \frac{\ln N}{\ln \ln N}$, then the value $\pi/\arccos\left(\lev{}/ \eignorm{}\right)$ still possesses the Gumbel distribution \eq{eq:gumbel} for the ensemble of sparse matrices at $N\gg 1$ at least slightly above of the percolation threshold. 

In the sparse regime the {\it extremal value statistics} (like the ``longest success run'' for the ensemble of tridiagonal matrices) matters and the Lifshitz tail of the spectral density $\rho(\lambda)\sim e^{-g(p)/\sqrt{|\lev{}-\lambda|}}$ close to $\lev{}$ ensures that the finite-size corrections to the largest eigenvalue have logarithmic behavior (see \eq{eq:11}):
\be
\left|\lambda_{\rm max}(\infty) - \lambda_{\rm max}(N)\right| \sim  \ln^{-2} N
\label{eq:sparse1}
\ee

The last question which we would like to comment concerns the dependence of the gap between the largest eigenvalue and the spectral boundary of the main zone (see \fig{fig:01}a) as a function of $p$. In the dense regime ($p=O(1)$) the largest eigenvalue $\lev{}$ is detached from the boundary of the semicircle by a gap of order of $pN$ as it follows from the Frobenius theorem -- compare \eq{eq:boundary} and \eq{eq:04}. On the other hand, at the percolation threshold, $p=1/N$, the largest eigenvalue $\lev{}$ coincides with the boundary of the main zone meaning that the gap between $\lev{}$ and $\levbound{}$ is closed. From \fig{fig:levofp} one can see that with decreasing $p$ from $p=O(1)$ towards the percolation threshold, $p=1/N$, the distance $|\lev{}-\levbound{}|$ shrinks and below $p=\frac{\ln N}{N}$ becomes of order of the distance between neighboring eigenvalues in the main zone (i.e. $\lev{}$ and $\levbound{}$ become indistinguishable). Let us note that the distance $|\lev{}-\levbound{}|$ shrinks slower than the distance from the largest eigenvalue of the second giant component which has maximum at the percolation point and is nullified (in average) before reaching $p_c^*=\frac{\ln N}{N}$. 

\begin{figure}[ht]
\centering
\includegraphics[width=0.7\linewidth]{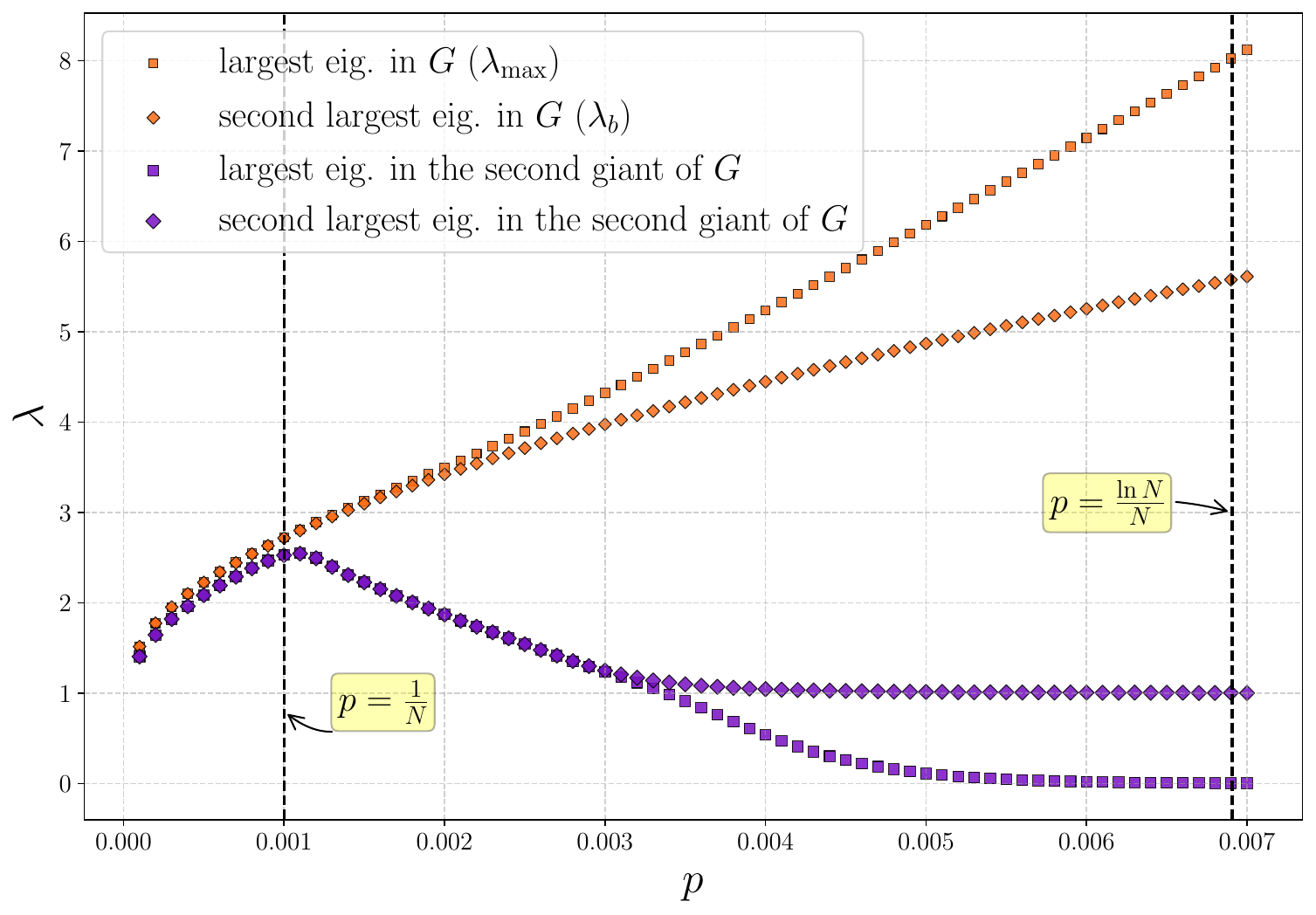}
\caption{\label{fig:levofp}
The largest eigenvalue ($\lev{}$), the second largest eigenvalue (boundary of the main zone, $\levbound{}$) and two largest eigenvalues of the second giant component in $G\sim \fancyG(N, p)$, where $N=1000$. Each value is averaged over $10^5$ samples of $G$. After $p=1/N$ the giant component starts to crowd out all other subgraphs, which is seen in the steep drop of the largest eigenvalue of the second giant component. $p=\ln N/N$ is the probability point where the whole graph almost sure becomes connected. Presence of the largest eigenvalue of the second giant component till this point is the sign of sporadic samples with subgraphs of small sizes.
}
\end{figure}

\begin{acknowledgments}
We are grateful to Alexander Gorsky for valuable discussions on different stages of the work. KP acknowledges hospitality of LPTMS (CNRS-Universit\'e Paris-Saclay) and Institute Curie (Paris) where a part of the work has been done.
\end{acknowledgments}

\bibliography{biblio}

\end{document}